\def\<{\langle}
\def\>{\rangle}
\def\be{\begin{equation}}
\def\ee{\end{equation}}
\def\ba{\begin{array}}
\def\ea{\end{array}}
\newtheorem{theorem}{Theorem}[section]
\newtheorem{proposition}{Proposition}[section]
\newtheorem{corollary}{Corollary}[section]
\theoremstyle{definition}
\newtheorem{remark}{Remark}[section]
\newtheorem{example}{Example}[section]
\newtheorem{definition}{Definition}[section]
\newtheorem*{definition32}{Definition 3.2$'$}
\newtheorem{problem}{Problem}
\numberwithin{equation}{section}
\def\be{\begin{equation}}
\def\ee{\end{equation}}
\def\br{\begin{eqnarray}}
\def\er{\end{eqnarray}}
\title{On the Hamiltonian and Geometric structure of the Craik-Leibovich equation}
\author{$\mbox{Cheng Yang\thanks{
E-mail: c\_yang11@fudan.edu.cn} }$
\\
School of Mathematical Sciences,\ \  Fudan University\\
Shanghai  200433,\ \   P.R.China
\\
}
\begin{document}

\maketitle

\date{}

\begin{quote}
\small {\bf Abstract} In this paper we show that the Craik-Leibovich (CL) equation in hydrodynamics is the Euler equation on the dual of a certain central extension of the Lie algebra of divergence-free vector fields. From this geometric viewpoint, one can give a generalization of CL equation on any Riemannian manifold with boundary. We also prove a stability theorem for 2-dimensional steady flows of the Craik-Leibovich equation.\\
\small {\bf Keywords} Euler equation, Central extension, Hamiltonian structure, Craik-Leibovich equation, Langmuir circulation, Stability.
\end{quote}


\section{Introduction}
\par
In the present paper, we study the Hamiltonian and geometric structure of the Craik-Leibovich (CL) equation describes the averaged motion of incompressible fluid with fast oscillating boundary. We prove that the CL equation can be regarded as the Euler equation on the dual of an appropriate central extension of the Lie algebra of divergence-free vector fields. This geometric point of view allows one to generalize the CL equation to any Riemannian manifolds with boundary. Also, one can obtain the Hamiltonian formulation of this equation. Using the energy-Casimir method of Arnold, we prove a stability theorem for the steady planar flows of the CL equation.
\par
The CL equation was derived by Craik and Leibovich \cite{crle} in their study of the theoretical model for Langmuir circulation. In 1938, Langmuir \cite{lang} reported his observation of windrows of seaweeds in the Sargasso Sea. When a wind blows over a water surface\textemdash sea or lake\textemdash steadily, one can see that small material, like seaweeds or bubbles, floated on the water would align with the wind direction. This is related to Langmuir circulation. Since its discovery, this interesting phenomenon initiates a lot of researches, both experimental and theoretical.
\par
In Craik-Leibovich theory, these circulations are caused by the interaction between the fluid motion and its fast oscillating boundary.  The dynamics of the averaged fluid motion is described by the following CL equation in a 3-dimensional domain with boundary:
\begin{equation}\label{e01}
\left\{
\begin{array}{ll}
\frac{\partial v}{\partial t}+(v,\nabla)v+curl\;v\times V_s=-\nabla p,\\
(v+V_s)\cdot\mathbf{n}=0,
\end{array}
\right.
\end{equation}
where $V_s$ is a prescribed Stokes drift velocity related to the average of surface waves.
\par
Equation (\ref{e01}) was first derived by Craik and Leibovich using averaging method. In \cite{vla},  a new derivation of Craik-Leibovich equation was given by Vladimirov and his coauthors using the generalized Krylov-Bogolyubov averaging method combined with two-timing method. In \cite{yang}, the author develops a general perturbation theory on the reduced space of a principal $G-$bundle. Applying this theory to a principal $\text{SDiff}(M)-$bundle related to the free boundary problems of incompressible fluid, the author was able to obtain the CL equation.
\par
 The above perturbation theory studied by the author also illuminates us the geometric structure of the CL equation: it turns out to be the Euler equation on the dual of a certain central extension of the Lie algebra of divergence-free vector fields. This geometric point of view enables one to give a higher-dimensional generalization of the CL equation on any Riemannian manifold with boundary in any dimension. And a large class of invariant functionals follows from this geometric structure.
 \par
 From this geometric viewpoint, We also give the Hamiltonian structure of the generalized CL equation on any $n-$dimensional Riemannian manifold with boundary. The Hamiltonian formulation of the classical CL equation was studied in works of Holm \cite{holm} and Vladimirov \cite{vla}.
\par
Euler equations on the duals of central extensions of Lie algebras arise in many interesting settings in mathematical physics. We would like to mention the infinite conductivity equation and the $\beta-$plane equation.
\begin{example}
The infinite conductivity equation on a Riemannian manifold $M$ in $\mathbb{R}^3$ is
\begin{equation}\label{e02}
\frac{\partial v}{\partial t}=-(v,\nabla)v-v\times B-\nabla p,
\end{equation}
where $B$ is a constant divergence-free magnetic field. In \cite{khch} Khesin and Chekanov studied the Hamiltonian and geometric structure of this equation. It turns out that this equation can be seen as the Euler equation on the dual space of the central extension of the Lie algebra of the divergence-free vector fields $\text{SVect}(M)$. The corresponding 2-cocycle is a Lichnerowicz 2-cocycle (see section 3.2) related to the magnetic field $B$. In their paper, Khesin and Chekanov generalized the infinite conductivity equation to any Riemannian manifolds in any dimension and found a large class of invariant functionals.
\end{example}

\begin{example}
 Another interesting equation appears in the study of the fluid motion on a rotating surface. In \cite{zeit} Zeitlin considered the $\beta$-plane equation (or Rossby waves equation):
\begin{equation}\label{e03}
\dot{\omega}+\{\psi,\omega\}+\beta\psi_x=0,
\end{equation}
where $\beta$ is a constant related to the Coriolis force, $\omega$ and $\psi$ are the vorticity and stream functions of the velocity fields of the fluid motion. It is the Euler equation on the dual of a central extension of the Lie algebra of the symplectomorphism group.
\end{example}
\par
This paper is organized as follows. Section 2 are some preliminaries about the Arnold's framework for Euler equation and central extensions of Lie algebras. In section 3, first we give the generalized CL equation on any Riemannian manifold with boundary (Theorem \ref{t31}), then prove the central extension structure of CL equation (Theorem \ref{t33}) together with the Hamiltonian structure of it, we also obtain a large class of invariant functionals. Section 4 discusses the stability of 2-dimensional steady flows of the Craik-Leibovich equation.

{\section{Geometric Preliminaries}

\subsection{Arnold's framework for the Euler equation}
\par
In his seminal paper \cite{arn}, Arnold developed the general theory for the Euler equation describing the geodesic
flow on an arbitrary Lie group equipped with a one-sided invariant Riemannian metric.
\par
Consider a finite or infinite-dimensional Lie group $G$ with Lie algebra $\mathfrak{g}$, and the dual of the Lie algebra is $\mathfrak{g}^*$. There exists a natural Lie-Poisson structure on the dual $\mathfrak{g}^*$.
\begin{definition}\label{d11}
 The \textbf{natural Lie--Poisson structure} $\{\;,\;\}_{LP}:C^{\infty}(\mathfrak{g}^*)\times C^{\infty}(\mathfrak{g}^*)\rightarrow C^{\infty}(\mathfrak{g}^*)\;$ on the dual space $\mathfrak{g}^*$ is the Poisson bracket defined by
$$
\{f,g\}_{LP}(m):=\langle [df,dg], m\rangle,\;\text{for}\; m\in \mathfrak g^*\;\text{and}\;f,g\in C^{\infty}(\mathfrak{g}^*),
$$
where the differentials are taken at the point $m$, and
$\langle \cdot,\cdot\rangle$ is a natural pairing between Lie algebra and its dual.
\end{definition}
It is well-known (see e.g. \cite{khmi}) that the Hamiltonian equation corresponding to a function $H$ and the Lie--Poisson structure $\{\;,\;\}_{LP}$ on $\mathfrak{g}^*$ is given by
\begin{equation}\label{e11}
\frac{dm}{dt}=ad^*_{dH}m.
\end{equation}
For a quadratic Hamiltonian function on $\mathfrak{g}^*$, we have the Euler equation:
\begin{definition}\label{d12}
 The \textbf{Euler equation} on $\mathfrak{g}^*$ is an equation corresponding to the quadratic (energy) Hamiltonian
$H(m)=-\frac 12\langle \mathbb{I}^{-1}m,m\rangle$:
\begin{equation}\label{e12}
\frac{dm}{dt}=-ad^*_{\mathbb{I}^{-1}m}m,
\end{equation}
where $\mathbb{I}:\mathfrak{g}\rightarrow \mathfrak{g}^*$ is an inertia operator.
\end{definition}
\par
Now we define a right-invariant metric on the Lie group $G$. Suppose we have a fixed quadratic form $E=\frac 12 \langle v,\mathbb{I}v\rangle$ on the Lie algebra $\mathfrak{g}$, where $v\in\mathfrak g$ and $\mathbb I:\mathfrak g\rightarrow\mathfrak{g}^*$ is an inertia operator, One can obtain the right-invariant metric on the tangent space $TG$ of group $G$ by using right translation.
\par
Arnold in \cite{arn} proved that for such a group with a right-invariant metric, the corresponding geodesic flow can be described by the equation (\ref{e12}). This equation (\ref{e12}) coincides with the classical Euler equation of an ideal fluid for the group $G=\text{SDiff}(M)$ with the right-invariant $L^2$-metric. More detailed discussion can be found in \cite{arkh}.
\par
Next we recall the Arnold's framework for the classical Euler equation of an incompressible fluid. Let $M$ be an $n-$dimensional Riemannian manifold with a volume form $\mu$, the configuration space for the motion of an incompressible fluid fills the manifold $M$ is the volume-preserving diffeomorphism group $\text{SDiff}(M)$, and its Lie algebra $\text{SVect}(M)$ consists of all the divergence-free vector fields on $M$. The natural right-invariant Riemannian metric on $\text{SDiff}(M)$ is induced from the $L^2$ product of divergence-free vector fields on $M$.
\par
 The (regular) dual space $\mathfrak g^*=\Omega^1(M)/d\Omega^0(M)$ is the quotient space of all 1-forms on $M$ modulo all exact 1-forms on $M$. The Lie algebra coadjoint action on a coset in the dual space $\Omega^1(M)/d\Omega^0(M)$ is well-defined: it is the Lie derivative along a vector field $v\in\text{SVect}(M)$.
\par
 We have the following theorem:
\begin{theorem}\label{t11}{\cite{arkh}}
The Euler equation on the dual space $\mathfrak g^*=\Omega^1(M)/d\Omega^0(M)$ is
\begin{equation}\label{e13}
\partial_t\;[u]=-\mathcal L_v[u],
\end{equation}
where $[u]\in\Omega^1(M)/d\Omega^0(M)$ is a coset of 1-forms and the 1-form $u=v^b$ is induced from the vector field $v\in\text{SVect}(M)$ by lowering indices.
\end{theorem}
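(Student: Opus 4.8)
The plan is to specialize the abstract Euler equation (\ref{e12}) to the group $G=\text{SDiff}(M)$ equipped with its right-invariant $L^2$-metric. This reduces to three ingredients: fixing the pairing between $\mathfrak g=\text{SVect}(M)$ and $\mathfrak g^*=\Omega^1(M)/d\Omega^0(M)$, identifying the inertia operator $\mathbb I$ of the $L^2$-metric, and computing the coadjoint operator $\text{ad}^*_v$ on cosets of $1$-forms.

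First I would fix the pairing $\langle[u],w\rangle:=\int_M u(w)\,\mu$ for $[u]\in\Omega^1(M)/d\Omega^0(M)$ and $w\in\text{SVect}(M)$, and check that it descends to the quotient: if $u=df$, then $\int_M df(w)\,\mu=\int_M(\mathcal L_w f)\,\mu=\int_M \text{div}(fw)\,\mu=0$, since $w$ is divergence-free and tangent to $\partial M$, so Stokes' theorem produces no boundary term. With respect to this pairing, the $L^2$-energy $E(v)=\frac12\int_M\langle v,v\rangle\,\mu=\frac12\int_M v^\flat(v)\,\mu=\frac12\langle[v^\flat],v\rangle$ shows that the inertia operator is $\mathbb I v=[v^\flat]$, so $\mathbb I^{-1}[u]$ is the unique divergence-free field $v$ with $v^\flat$ representing the coset $[u]$.

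Next I would compute $\text{ad}^*_v$. For $v,w\in\text{SVect}(M)$ one has $\text{ad}_v w=-[v,w]$, where $[\cdot,\cdot]$ is the Jacobi--Lie bracket taken with the sign appropriate to the right-invariant setting, and the Leibniz rule for the Lie derivative gives $u([v,w])=\mathcal L_v\bigl(u(w)\bigr)-(\mathcal L_v u)(w)$. Integrating against $\mu$ and using $\int_M\mathcal L_v\bigl(u(w)\bigr)\,\mu=\int_M\text{div}\bigl(u(w)\,v\bigr)\,\mu=0$ (again because $v$ is divergence-free and tangent to $\partial M$), I obtain
\begin{equation*}
\langle\text{ad}^*_v[u],w\rangle=\langle[u],-[v,w]\rangle=\int_M(\mathcal L_v u)(w)\,\mu=\langle[\mathcal L_v u],w\rangle ,
\end{equation*}
hence $\text{ad}^*_v[u]=\mathcal L_v[u]$; here $\mathcal L_v$ is well defined on the quotient because $\mathcal L_v(df)=d(\mathcal L_v f)\in d\Omega^0(M)$. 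Substituting $m=[u]=\mathbb I v$, $\mathbb I^{-1}m=v$ and $\text{ad}^*_v=\mathcal L_v$ into the Euler equation (\ref{e12}) gives $\partial_t[u]=-\mathcal L_v[u]$, which is (\ref{e13}).

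The only subtle points are bookkeeping: verifying that the pairing, the operator $\text{ad}^*_v$, and $\mathcal L_v$ all pass to the quotient $\Omega^1(M)/d\Omega^0(M)$, and that the integrations by parts carry no boundary contribution — which is exactly guaranteed by the tangency of fields in $\text{SVect}(M)$ to $\partial M$. One must also track the sign conventions for $\text{ad}_v$ relative to the geometric bracket $[v,w]$ in order to land on the stated minus sign. Beyond these conventions there is no genuine analytic obstacle at the level of the formal computation; the whole content of the theorem is the identification of $\text{ad}^*$ with the Lie derivative on cosets of $1$-forms.
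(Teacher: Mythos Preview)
Your argument is correct and is the standard derivation. Note, however, that the paper does not actually supply a proof of this theorem: it is stated with a citation to \cite{arkh} and used as background. Your write-up reproduces exactly the computation one finds there --- verify the pairing descends to $\Omega^1(M)/d\Omega^0(M)$, identify $\mathbb I v=[v^\flat]$, and show $\text{ad}^*_v=\mathcal L_v$ on cosets via the Leibniz identity $u([v,w])=\mathcal L_v(u(w))-(\mathcal L_v u)(w)$ together with $\int_M \mathcal L_v(\cdot)\,\mu=0$ for $v\in\text{SVect}(M)$ --- so there is nothing to compare beyond observing that your proof fills in what the paper leaves to the reference.
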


\subsection{Central extensions}
\begin{definition}\label{d13}
 A \textbf{central extension} of a Lie algebra $\mathfrak{g}$ by a vector space $V$ is a Lie algebra $\hat{\mathfrak{g}}=\mathfrak{g}\oplus V$ with the Lie bracket:
$$
[(X,u),(Y,v)]^{\wedge}=([X,Y],\;\widehat{\omega}(X,Y)),
$$
for a Lie algebra 2-cocycle $\widehat{\omega}:\mathfrak{g}\times\mathfrak{g}\rightarrow V$, which is a bilinear, antisymmetric form satisfies the cocycle identity:
$$
\widehat{\omega}([X,Y],Z)+\widehat{\omega}([Y,Z],X)+\widehat{\omega}([Z,X],Y)=0.
$$
\end{definition}
\begin{example}
Let $M$ be a compact manifold with a volume form $\mu$ and $\beta$ is a closed 2-form on $M$. The Lichnerowicz 2-cocycle $\widehat{\omega}_{\beta}$ on Lie algebra $\text{SVect}(M)$ of divergence-free vector fields on $M$ tangent to the boundary of $M$ is defined by
$$
\widehat{\omega}_{\beta}(X,Y)=\int_{M}\beta(X,Y)\;\mu.
$$
\par
Now consider the case of the infinite conductivity equation defined on any $n-$dimensional Riemannian manifold $M$:
\begin{equation}\label{e14}
\frac{\partial v}{\partial t}=-(v,\nabla)v-v\times B-\nabla p,
\end{equation}
 where $B$ is an $(n-2)-$vector field on $M$ which is a smooth section on $\wedge^{n-2}TM$ and the cross product of a vector field $X$ with $B$ is the vector field $X \times B = *(X\wedge B)$. We set the closed 2-form $\beta = (-1)^{n-2}i_B\mu$, the equation (\ref{e14}) can be viewed as the Euler equation on the dual space of the central extension of $\text{SVect}(M)$ with a Lichnerowicz 2-cocycle $\widehat{\omega}_{\beta}$ corresponding to 2-form $\beta$ \cite{khch} \cite{rog} \cite{viz}.
\end{example}

\section{Geometric and Hamiltonian structure of the Craik-Leibovich equation}
\subsection{A special central extension}
In this section, we are going to define a special 2-cocycle for a general Lie algebra $\mathfrak g$. The corresponding central extension relates to the geometric structure of Craik-Leibovich equation.
\begin{definition}\label{d14}
 We define the \textbf{shifted 2-cocycle} $\widehat{\omega}_{V_s}:\mathfrak{g}\times\mathfrak{g}\rightarrow \mathbb{R}$ on the Lie algebra $\mathfrak{g}$ for a fixed vector $V_s\in\mathfrak{g}$ by
\begin{equation}\label{e113}
 \widehat{\omega}_{V_s}(X,Y)=-\left\langle ad^*_{X}\;\mathbb{I}(V_s),Y\right\rangle,
\end{equation}
where $X,\;Y\in\mathfrak{g}$ and $\mathbb{I}$ is the inertia operator on $\mathfrak{g}$.
\end{definition}
\begin{remark}
 Because $-\left\langle ad^*_{X}\;\mathbb{I}(V_s),Y\right\rangle=-\left\langle \mathbb{I}(V_s),[X,Y]\right\rangle$, we know that $\widehat{\omega}_{V_s}$ is a trivial 2-cocycle, or 2-coboundary.
\end{remark}

\par
Let $\hat{\mathfrak{g}}_{V_s}$ be the central extension of the Lie algebra $\mathfrak{g}$ with the 2-cocycle $\widehat{\omega}_{V_s}$. First, we derive the Euler equation on $\hat{\mathfrak{g}}_{V_s}^*$,
\begin{proposition}\label{t11}
  The Euler equation on $\hat{\mathfrak{g}}_{V_s}^*$ corresponding to the quadratic (energy) Hamiltonian $H(m)=-\frac{1}{2}\langle\mathbb{I}^{-1}m,m\rangle$ is
\begin{equation}\label{e114}
 \frac{d}{dt}\;m=-ad^*_{\mathbb{I}^{-1}m}\;\{m-a\mathbb{I}(V_s)\}.
\end{equation}
\end{proposition}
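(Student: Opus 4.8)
The plan is to compute the coadjoint action $\mathrm{ad}^*$ on the dual of the extended algebra $\hat{\mathfrak g}_{V_s}$ explicitly, and then specialize the general Hamiltonian equation (\ref{e11}) to the quadratic Hamiltonian. First I would write a general element of $\hat{\mathfrak g}_{V_s}^* = \mathfrak g^* \oplus \mathbb R$ as a pair $(m, a)$, paired with $(X, u) \in \mathfrak g \oplus \mathbb R$ by $\langle (m,a),(X,u)\rangle = \langle m, X\rangle + a u$. The key computation is to derive the formula for $\mathrm{ad}^*_{(X,u)}(m,a)$ from the bracket in Definition \ref{d13} with cocycle $\widehat\omega_{V_s}$: by definition $\langle \mathrm{ad}^*_{(X,u)}(m,a), (Y,w)\rangle = \langle (m,a), [(X,u),(Y,w)]\rangle = \langle (m,a), ([X,Y], \widehat\omega_{V_s}(X,Y))\rangle = \langle m, [X,Y]\rangle + a\,\widehat\omega_{V_s}(X,Y)$. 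Substituting $\widehat\omega_{V_s}(X,Y) = -\langle \mathbb I(V_s),[X,Y]\rangle$ and collecting the $Y$-terms, one finds the $\mathfrak g^*$-component of $\mathrm{ad}^*_{(X,u)}(m,a)$ equals $\mathrm{ad}^*_X\!\big(m - a\,\mathbb I(V_s)\big)$, while the central component is $0$ (as always for a central extension). Note the central slot $a$ is a Casimir: it is constant along the flow.

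Next I would specialize. The quadratic Hamiltonian $H(m,a) = -\tfrac12\langle \mathbb I^{-1} m, m\rangle$ depends only on the $\mathfrak g^*$-part, so its differential at $(m,a)$ is $(\mathbb I^{-1} m, 0) \in \mathfrak g \oplus \mathbb R$. Plugging $dH = (\mathbb I^{-1}m, 0)$ into the Hamiltonian equation (\ref{e11}) in the form $\frac{d}{dt}(m,a) = \mathrm{ad}^*_{dH}(m,a)$ and using the coadjoint formula just derived gives $\frac{dm}{dt} = \mathrm{ad}^*_{\mathbb I^{-1}m}\big(m - a\,\mathbb I(V_s)\big)$ and $\frac{da}{dt}=0$. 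Comparing signs with the convention in Definition \ref{d12} — where the Euler equation is written $\frac{dm}{dt} = -\mathrm{ad}^*_{\mathbb I^{-1}m} m$ — one reads off (\ref{e114}): $\frac{d}{dt} m = -\mathrm{ad}^*_{\mathbb I^{-1}m}\{m - a\,\mathbb I(V_s)\}$, with the overall minus sign coming from the sign convention relating $\mathrm{ad}^*$ in (\ref{e11}) to the one used for the Euler equation.

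The only point requiring care — and the main obstacle, though a mild one — is bookkeeping of sign and convention: one must be consistent about the sign in the definition of $\mathrm{ad}^*$ (whether $\langle \mathrm{ad}^*_X m, Y\rangle = \langle m, [X,Y]\rangle$ or $= -\langle m,[X,Y]\rangle$) and about the $a$ versus $-a$ placement induced by the sign in (\ref{e113}). I would fix the convention used in Section 2 at the outset and track it through; the cocycle identity for $\widehat\omega_{V_s}$ need not be reverified here since it was already noted in the Remark that $\widehat\omega_{V_s}$ is a coboundary (hence a genuine $2$-cocycle), so $\hat{\mathfrak g}_{V_s}$ is a well-defined Lie algebra and the general theory of Section 2.1 applies verbatim. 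Everything else is a direct substitution.
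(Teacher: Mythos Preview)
Your proposal is correct and follows essentially the same route as the paper: both compute $\langle ad^*_{(X,b)}(m,a),(Y,c)\rangle$ directly from the extended bracket, substitute the cocycle $\widehat{\omega}_{V_s}(X,Y)=-\langle\mathbb I(V_s),[X,Y]\rangle$, and read off that the $\mathfrak g^*$-component of the coadjoint action is $ad^*_X(m-a\,\mathbb I(V_s))$, whence the Euler equation follows by setting $X=\mathbb I^{-1}m$. Your extra remarks that $\frac{da}{dt}=0$ and that sign conventions need tracking are fine embellishments but not a different argument.
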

\begin{proof}
 Since
\begin{align*}
 \left\langle ad^*_{(X,b)}(m,a),(Y,c)\right\rangle
=&\left\langle (m,a),([X,Y],\widehat{\omega}_{V_s}(X,Y))\right\rangle
=\left\langle m,[X,Y]\right\rangle+a\;\widehat{\omega}_{V_s}(X,Y)\\
=&\left\langle ad^*_X\;m,Y\right\rangle-\left\langle a \;ad^*_{X}\mathbb{I}(V_s),Y\right\rangle
=\left\langle ad^*_X\;m-a \;ad^*_{X}\mathbb{I}(V_s),Y\right\rangle,
\end{align*}
we get that the Euler equation on the dual space $\hat{\mathfrak{g}}_{V_s}^*$ of the central extension of $\mathfrak{g}$ is
$$
 \frac{d}{dt}\;m=-ad^*_{\mathbb{I}^{-1}m}\;\{m-a\mathbb{I}(V_s)\}.
$$
\end{proof}

\subsection{Central extension structure of the Craik-Leibovich equation}
Let $M$ be an $n-$dimensional Riemannnian manifold with boundary $\partial M$, recall that group $\text{SDiff}(M)$ is the group of all volume-preserving diffeomorphisms on $M$, the Lie algebra $\text{SVect}(M)$ is all the divergence-free vector field on $M$ tangent to the boundary $\partial M$, the regular dual space $\Omega^1(M)/d\Omega^0(M)$ of the Lie algebra is the space of cosets of 1-forms on $M$ modulo the exact 1-forms. First we give an $n-$dimensional generalization of the Craik-Leibovich equation.
\begin{theorem}\label{t31}
The \textbf{\textit{$n-$dimensional generalized Craik-Leibovich (CL) equation}} on the space $\Omega^1(M)/d\Omega^0(M)$ is
\begin{equation}\label{e34}
 \frac{d}{dt}\;[u]=-\mathcal{L}_{v+V_s}\;[u],
\end{equation}
where $v+V_s\in \text{SVect}(M)$, and $[u]=[v^b]\in\Omega^1(M)/d\Omega^0(M)$.
\end{theorem}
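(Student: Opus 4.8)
The plan is to justify the terminology of Theorem \ref{t31} by showing that, on a three-dimensional domain $M\subset\mathbb{R}^{3}$ with the flat metric, equation (\ref{e34}) is precisely the classical Craik--Leibovich system (\ref{e01}). Structurally the point is that (\ref{e34}) is nothing but the Euler equation (\ref{e13}) of an ideal fluid with the transport field $v$ replaced by the shifted field $v+V_s$, the momentum $1$-form still being $u=v^{b}$ with $v$ the unshifted velocity. Since (\ref{e34}) is an identity in the quotient $\Omega^{1}(M)/d\Omega^{0}(M)$, I would fix the representative $u=v^{b}$ of $[u]$, so that (\ref{e34}) asserts the existence of a function $P$ on $M$ with $\partial_{t}v^{b}+\mathcal{L}_{v+V_s}v^{b}=dP$.

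First I would expand the Lie derivative by Cartan's formula $\mathcal{L}_{w}v^{b}=i_{w}\,dv^{b}+d(i_{w}v^{b})$ with $w=v+V_s$. The term $d(i_{w}v^{b})=d\langle v,v+V_s\rangle$ is exact and is harmlessly absorbed into $dP$. For the remaining term I would use the three-dimensional identity $i_{w}(dv^{b})=(curl\;v\times w)^{b}$ — obtained either from the Hodge-star identification of the $2$-form $dv^{b}$ with $curl\;v$, or directly from the coordinate expression $(\mathcal{L}_{w}v^{b})_{i}=w^{j}\partial_{j}v_{i}+v_{j}\partial_{i}w^{j}$ after regrouping. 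Hence (\ref{e34}) becomes, after renaming the pressure, $\partial_{t}v+curl\;v\times(v+V_s)=-\nabla P'$.

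Next I would split $curl\;v\times(v+V_s)=curl\;v\times v+curl\;v\times V_s$ and insert the vector identity $curl\;v\times v=(v,\nabla)v-\tfrac12\nabla|v|^{2}$; absorbing the gradient into the pressure once more leaves $\partial_{t}v+(v,\nabla)v+curl\;v\times V_s=-\nabla p$, which is the first line of (\ref{e01}). The hypothesis $v+V_s\in\text{SVect}(M)$ means exactly that $v+V_s$ is divergence-free and tangent to $\partial M$; since $V_s$ is a prescribed element of $\text{SVect}(M)$, this supplies both the incompressibility of $v$ implicit in (\ref{e01}) and the boundary condition $(v+V_s)\cdot\mathbf{n}=0$. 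Running the computation backwards gives the converse, so (\ref{e34}) and (\ref{e01}) coincide in dimension three.

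The computation is elementary; the one place requiring care — a bookkeeping point rather than a genuine obstacle — is keeping track of which $1$-forms are exact and therefore invisible in $\Omega^{1}(M)/d\Omega^{0}(M)$, since it is precisely the quotient by $d\Omega^{0}(M)$ that manufactures the pressure gradient. Relatedly, one should note that $-\mathcal{L}_{v+V_s}$ is well defined on cosets only because $v+V_s$ is tangent to $\partial M$; this is the same observation that underlies the classical case (\ref{e13}), see \cite{arkh}, and it is what ties the geometric equation to the boundary value problem (\ref{e01}).
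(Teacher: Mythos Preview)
Your argument is correct and follows essentially the same route as the paper: expand $\mathcal{L}_{v+V_s}v^{b}$, separate off the exact pieces into the pressure, and recognize the remaining curl term. The paper organizes the computation slightly differently---it splits the Lie derivative as $\mathcal{L}_{v}v^{b}+\mathcal{L}_{V_s}v^{b}$ and invokes the Riemannian identity $\mathcal{L}_{v}(v^{b})=(\nabla_{v}v)^{b}+\tfrac12\, d\langle v,v\rangle$ together with $i_{V_s}dv^{b}=(curl\,v\times V_s)^{b}$---which lets it write down an $n$-dimensional form $v_t+\nabla_v v+curl\,v\times V_s=-\nabla p$ (with $curl\,v$ interpreted as an $(n-2)$-vector field) before specializing to $n=3$; your version, using Cartan's formula on the full transport field and then the Euclidean identity $curl\,v\times v=(v,\nabla)v-\tfrac12\nabla|v|^{2}$, is the same calculation restricted to the flat three-dimensional case.

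One small correction: you assert that $V_s$ itself lies in $\text{SVect}(M)$ and from this deduce that $v$ is individually divergence-free and tangent to $\partial M$. The paper explicitly does \emph{not} assume this (see the remark immediately following the proof): only the sum $v+V_s$ is required to be in $\text{SVect}(M)$, which is precisely the content of the boundary condition $(v+V_s)\cdot\mathbf{n}=0$ in (\ref{e01}). Your computation does not actually use the stronger claim, so this is a sentence to delete rather than a gap in the logic.
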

\begin{proof}
Let $u=v^b$, the equation (\ref{e34}) becomes
$$
\frac{d}{dt}\;u=-\mathcal{L}_{v+V_s}\;u+d\psi,
$$
Follow from the identities
$$
\mathcal{L}_v(v^b)=(\nabla_v v)^b+\frac 12 d\langle v,\;v\rangle
$$
and
$$
*(curl \;v\wedge V_s)=i_{V_s}i_{curl \;v}\mu=i_{V_s}dv^b=\mathcal{L}_{V_s}u,
$$
we obtain an equation which can be seen as the CL equation on an $n-$dimensional manifold $M$
\begin{equation}\label{e35}
 v_t+\nabla_v v+curl \;v\times V_s=-\nabla p,
\end{equation}
where $curl\;v$ is defined as an $(n-2)-$vector field, also condition $v+V_s\in \text{SVect}(M)$ gives us the boundary condition. When the dimension $n=3$, equation (\ref{e35}) is the classical CL equation.
\end{proof}
\begin{remark}
Note that velocity fields $v$ and $V_s$ do not have to be elements in the Lie algebra $\text{SVect}(M)$, but their sum $v+V_S$ is an element of $\text{SVect}(M)$ which gives the boundary condition.
\end{remark}
We show that the CL equation (\ref{e34}) is a Hamiltonian equation,
\begin{corollary}\label{t32}
Equation (\ref{e34}) is a Hamiltonian equation on coadjoint orbits in $\mathfrak{g}^*=\Omega^1(M)/d\Omega^0(M)$ with the Hamiltonian function $H=-\frac 12([u+V_s^b],\;\mathbb{I}^{-1}[u+V_s^b])$.
\end{corollary}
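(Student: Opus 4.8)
The plan is to read the $n$-dimensional generalized CL equation (\ref{e34}) as the Euler equation (\ref{e114}) on $\hat{\mathfrak g}^*_{V_s}$ from Proposition \ref{t11}, after a fixed affine translation of the momentum variable, and then to invoke the standard fact that a Lie--Poisson Euler equation restricts to a Hamiltonian system on each coadjoint orbit.

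\emph{Step 1 (the fluid dictionary).} Specialize $\mathfrak g=\text{SVect}(M)$, $\mathfrak g^*=\Omega^1(M)/d\Omega^0(M)$, with $\mathbb I$ the $L^2$ inertia operator $v\mapsto[v^b]$; as recalled in the preliminaries of Section~2.1 the coadjoint action on cosets of $1$-forms is $ad^*_v[u]=\mathcal L_v[u]$, and $\mathbb I^{-1}[u+V_s^b]=v+V_s$. Under this identification (\ref{e34}) reads $\frac{d}{dt}[u]=-ad^*_{\mathbb I^{-1}[u+V_s^b]}\,[u]$.

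\emph{Step 2 (the momentum shift).} In $\hat{\mathfrak g}^*_{V_s}=\mathfrak g^*\oplus\mathbb R$ the central coordinate $a$ is fixed by the coadjoint action (this is exactly the vanishing of the second slot in the computation in the proof of Proposition \ref{t11}), so the dynamics preserves each hyperplane $\{a=\text{const}\}$. Setting $a=1$ and substituting $m=[u]+\mathbb I(V_s)=[u+V_s^b]$ into (\ref{e114}): since $V_s$ is fixed, $\frac{d}{dt}m=\frac{d}{dt}[u]$, while $\mathbb I^{-1}m=v+V_s$ and $m-\mathbb I(V_s)=[u]$, so (\ref{e114}) becomes precisely the equation of Step~1, that is (\ref{e34}). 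The same substitution turns the energy Hamiltonian $H(m)=-\frac12\langle\mathbb I^{-1}m,m\rangle$ into $-\frac12([u+V_s^b],\mathbb I^{-1}[u+V_s^b])$, the asserted $H$.

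\emph{Step 3 (reduction to coadjoint orbits of $\mathfrak g^*$).} By the remark after Definition \ref{d14}, $\widehat\omega_{V_s}$ is the coboundary associated with the functional $\mathbb I(V_s)\in\mathfrak g^*$; equivalently the map $(X,s)\mapsto(X,s+\langle\mathbb I(V_s),X\rangle)$ is a Lie-algebra isomorphism of $\hat{\mathfrak g}_{V_s}$ onto the trivial extension $\mathfrak g\oplus_0\mathbb R$, whose dual carries, on each hyperplane $\{a=\text{const}\}$ (a Poisson submanifold, $a$ being a Casimir), the Lie--Poisson structure of $\mathfrak g^*$ with the ordinary coadjoint orbits as symplectic leaves. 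Dualizing, the translation $(m,1)\mapsto(m-\mathbb I(V_s),1)$ — which on $\{a=1\}$ is exactly $[u+V_s^b]\mapsto[u]$ — is a Poisson isomorphism, in particular a symplectomorphism of the corresponding orbit (Kirillov--Kostant) symplectic forms. Combining this with the standard Lie--Poisson principle (the equation $\dot m=ad^*_{dH}m$ is Hamiltonian for the Lie--Poisson bracket and restricts to a Hamiltonian system on each coadjoint orbit with $H$ restricted to it; see \cite{khmi}, \cite{arkh}) and Steps~1--2, we conclude that (\ref{e34}) is Hamiltonian on the coadjoint orbits in $\mathfrak g^*=\Omega^1(M)/d\Omega^0(M)$ with Hamiltonian $H=-\frac12([u+V_s^b],\mathbb I^{-1}[u+V_s^b])$.

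\emph{Main obstacle.} Everything above is routine once set up; the only delicate point is the bookkeeping around the coboundary, namely verifying that translation by the constant $\mathbb I(V_s)$ genuinely intertwines the orbit symplectic structure on $\{a=1\}\subset\hat{\mathfrak g}^*_{V_s}$ with that of $\mathfrak g^*$, and — as always for $\text{SDiff}(M)$ — that all of these statements are understood in the regular-dual sense, where the coadjoint action on cosets of $1$-forms is the Lie derivative and ``coadjoint orbit'' is read accordingly.
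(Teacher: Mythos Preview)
Your proof is correct, but it takes a substantially longer route than the paper's. The paper argues in two lines: the Hamiltonian equation on coadjoint orbits in $\mathfrak g^*$ has the general form $\frac{d}{dt}[u]=-\mathcal L_{\delta H/\delta[u]}[u]$, and for the given $H$ one has $\delta H/\delta[u]=\mathbb I^{-1}[u+V_s^b]=v+V_s$, which yields (\ref{e34}) immediately. No central extension, no momentum shift, no coboundary argument --- just a direct computation of the functional derivative.

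Your approach instead routes the argument through Proposition~\ref{t11} and the central extension $\hat{\mathfrak g}_{V_s}$: you recognize (\ref{e34}) as (\ref{e114}) at $a=1$ after the affine shift $m=[u]+\mathbb I(V_s)$, and then use the triviality of the cocycle to transport the coadjoint-orbit symplectic structure from the hyperplane $\{a=1\}\subset\hat{\mathfrak g}^*_{V_s}$ back to $\mathfrak g^*$. This is sound and in fact anticipates (and essentially reproves) Theorem~\ref{t33}, which in the paper comes \emph{after} this corollary. What your version buys is a clearer structural explanation of \emph{why} a shifted quadratic Hamiltonian appears and why the orbits are the ordinary $\mathfrak g^*$-orbits rather than orbits of the extension; what the paper's version buys is brevity, since the functional derivative of a quadratic with a fixed offset is immediate and no Poisson-isomorphism bookkeeping is needed.
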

\begin{proof}
The Hamiltonian equation on coadjoint orbits in $\mathfrak{g}^*=\Omega^1(M)/d\Omega^0(M)$ is
$$
\frac{d}{ds}\;[u]=-\mathcal{L}_{\frac{\delta H}{\delta [u]}}\;[u].
$$
For Hamiltonian function $H=-\frac 12([u+V_s^b],\;\mathbb{I}^{-1}[u+V_s^b])$, the functional derivative $\frac{\delta H}{\delta [u]}=\mathbb{I}^{-1}[u+V_s^b])=v+V_s$, so we have equation (\ref{e34}).
\end{proof}

Now set $[u]'=[u+V_s^b]$, equation (\ref{e34}) becomes
\begin{equation}\label{e36}
\frac{d}{dt}\;[u]'=-\mathcal{L}_{\mathbb{I}^{-1}[u]'}\;\{[u]'-[V_s^b]\}.
\end{equation}
\begin{theorem}\label{t33}
The equation (\ref{e36}) is the Euler equation on the central extension of the Lie algebra $\mathfrak{g}=SVect(D)$ by means of the 2-cocycle
$$
\widehat{\omega}_{V_s}(X,Y)=-\left\langle\mathcal{L}_{X}\;V_s^b,Y\right\rangle
$$
associated to the vector field $V_s$.
\end{theorem}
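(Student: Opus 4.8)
The plan is to obtain Theorem~\ref{t33} as a direct specialization of Proposition~\ref{t11}. First I would fix the dictionary between the abstract setup of Section~3.1 and the fluid setup on $M$: the Lie algebra is $\mathfrak{g}=\text{SVect}(M)$, its regular dual is $\Omega^1(M)/d\Omega^0(M)$, the inertia operator $\mathbb{I}$ is the $L^2$ ``flat'' map $v\mapsto[v^b]$, and---comparing the coordinate-free Euler equation~(\ref{e13}) with the Lie--Poisson Euler equation~(\ref{e12})---the coadjoint operator is $ad^*_v[u]=\mathcal{L}_v[u]$. Under this dictionary the shifted 2-cocycle of Definition~\ref{d14}, equation~(\ref{e113}), becomes $\widehat{\omega}_{V_s}(X,Y)=-\langle ad^*_X\,\mathbb{I}(V_s),Y\rangle=-\langle\mathcal{L}_X V_s^b,Y\rangle$, which is exactly the 2-cocycle asserted in the statement. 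The Remark following Definition~\ref{d14} already shows this is a genuine (indeed trivial) Lie-algebra 2-cocycle, so the central extension $\hat{\mathfrak{g}}_{V_s}$ is well defined; I would only note in passing that although $V_s$ itself need not lie in $\text{SVect}(M)$, the $1$-form $V_s^b$ is well defined, its coset $[V_s^b]$ lies in $\Omega^1(M)/d\Omega^0(M)$, and $\mathcal{L}_X[V_s^b]$ descends to the quotient for every $X\in\text{SVect}(M)$, so all the terms above make sense.

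Next I would apply Proposition~\ref{t11} verbatim: the Euler equation on $\hat{\mathfrak{g}}_{V_s}^*$ for the energy Hamiltonian $H(m)=-\frac12\langle\mathbb{I}^{-1}m,m\rangle$ is equation~(\ref{e114}), i.e. $\frac{d}{dt}m=-ad^*_{\mathbb{I}^{-1}m}\{m-a\,\mathbb{I}(V_s)\}$. The scalar $a$ is the coordinate on the center of $\hat{\mathfrak{g}}_{V_s}^*$; the computation in the proof of Proposition~\ref{t11} shows that the central component of $ad^*$ vanishes, so $a$ is a Casimir, is constant along the flow, and we restrict to the affine leaf $a=1$. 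Substituting $ad^*=\mathcal{L}$, $m=[u]'$, $\mathbb{I}^{-1}[u]'=v+V_s$ and $\mathbb{I}(V_s)=[V_s^b]$ into equation~(\ref{e114}) then yields $\frac{d}{dt}[u]'=-\mathcal{L}_{\mathbb{I}^{-1}[u]'}\{[u]'-[V_s^b]\}$, which is precisely equation~(\ref{e36}). Since (\ref{e36}) was obtained from the generalized CL equation~(\ref{e34}) by the substitution $[u]'=[u+V_s^b]$, and the Hamiltonian in Corollary~\ref{t32} equals $-\frac12\langle[u]',\mathbb{I}^{-1}[u]'\rangle$, matching the quadratic Hamiltonian above, this identifies the generalized CL equation with an Euler equation on the centrally extended Lie algebra $\hat{\mathfrak{g}}_{V_s}$.

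The substantive content is just this translation, and the only place requiring care is the sign and normalization bookkeeping: verifying that the coadjoint action of $\text{SVect}(M)$ on the dual of the \emph{extension} is still the Lie derivative on the $\mathfrak{g}^*$-component, with the extra cocycle term being exactly the $-a\,\mathbb{I}(V_s)$ correction already isolated in Proposition~\ref{t11}, and checking that the physically prescribed Stokes drift $V_s$ corresponds to the normalization $a=1$. No new estimate or construction is needed beyond Proposition~\ref{t11}; I expect the main (mild) obstacle to be making the passage to cosets in $\Omega^1(M)/d\Omega^0(M)$ rigorous while $V_s$ is not assumed divergence-free, which is handled by the well-definedness remark above.
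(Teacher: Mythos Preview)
Your proposal is correct and follows exactly the paper's approach: the paper's proof is the single line ``We choose $a=1$, then this theorem follows from Proposition~\ref{t11},'' and your argument is a careful unpacking of precisely that specialization (setting up the $\text{SVect}(M)$ dictionary, identifying $ad^*$ with $\mathcal{L}$ and $\mathbb{I}(V_s)$ with $[V_s^b]$, then restricting to $a=1$). Your additional remarks on $V_s\notin\text{SVect}(M)$ and the matching of Hamiltonians go somewhat beyond what the paper writes out but are consistent with its surrounding discussion.
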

\begin{proof}
We choose $a=1$, then this theorem follows from proposition \ref{t11}.
\end{proof}
\begin{remark}
 This is the Lichnerowicz 2-cocycle. Indeed, let the closed 2-form $\beta=-dV_s^b$, then the corresponding Lichnerowicz 2-cocycle is
 \begin{align*}
 \widehat{\omega}_{\beta}(X,Y)
=&\int_M-dV_s^b(X,Y)\;\mu
=-\int_M\langle i_X dV_s^b,Y\rangle\;\mu\\
=&-\int_M\langle\mathcal{L}_X V_s^b,Y\rangle\;\mu+\int_M\langle di_X V_s^b,Y\rangle\;\mu
=-\left\langle\mathcal{L}_{X}\;V_s^b,Y\right\rangle=\widehat{\omega}_{V_s}(X,Y),
\end{align*}
in the above equations $\int_M\langle di_X V_s^b,Y\rangle\;\mu=0$ since $Y\in\text{SVect}(M)$.
\end{remark}
The geometric structure of the CL equation gives us the first integrals similar to those invariants studied in \cite{khch} for the infinite conductivity equation.
\begin{corollary}\label{t34}
Equation (\ref{e34}) has
\par
(1) an integral $I(v)=\int_M u\wedge\;(du)^m$ for $u=v^b$ in the case of an odd $n=2m+1$,
\par
(2) infinitely many integrals
$$
I_f(v)=\int_M f\left(\frac{(du)^m}{vol_M}\right)\;vol_M
$$
in the case of an even $n=2m$, here $vol_M$ is the volume form on $M$.
\end{corollary}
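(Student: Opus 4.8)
The plan is to deduce everything from a single observation, exactly as in Khesin and Chekanov's treatment \cite{khch} of the infinite conductivity equation: the ``vorticity'' $2$-form $\omega:=du$ is frozen into the volume-preserving flow of the divergence-free field $w:=v+V_s$. From the proof of Theorem \ref{t31}, a representative $u=v^b$ of the coset $[u]$ satisfies $u_t=-\mathcal{L}_{w}u+d\psi$ for some time-dependent function $\psi$ (the $d\psi$ being the price of working with cosets in $\Omega^1(M)/d\Omega^0(M)$). Applying the exterior derivative and using $d\,\mathcal{L}_w=\mathcal{L}_w d$ and $d^2=0$ gives the transport equation
$$\omega_t+\mathcal{L}_{w}\,\omega=0,\qquad w=v+V_s\in\text{SVect}(M).$$
In particular $w$ is genuinely divergence-free and tangent to $\partial M$ (this is the content of the Remark following Theorem \ref{t31}), so its flow defines a path $\Psi_t$ in $\text{SDiff}(M)$ with $\omega(t)=\Psi_t^{*}\,\omega(0)$. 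Note also that $\omega=du$, hence anything built from it, depends only on the class $[u]$, since $d(u+df)=du$.

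For the even case $n=2m$, I would argue directly with the flow. Here $\omega^{m}$ is a top-degree form, so write $\omega^{m}=\rho\,vol_M$ with $\rho:=(du)^m/vol_M\in C^{\infty}(M)$; this $\rho$ is a well-defined function of $[u]$. Since $\Psi_t^{*}vol_M=vol_M$ and $\omega^m(t)=\Psi_t^{*}\omega^m(0)$, the density is transported as a function, $\rho(t)=\Psi_t^{*}\rho(0)$, whence $f(\rho(t))\,vol_M=\Psi_t^{*}\!\big(f(\rho(0))\,vol_M\big)$ for any smooth $f$. Integrating over $M$ and using that $\Psi_t$ preserves orientation and volume (so $\int_M\Psi_t^{*}\nu=\int_M\nu$ for top forms $\nu$, with no boundary contribution), one gets $I_f(v(t))=I_f(v(0))$.

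For the odd case $n=2m+1$, I would differentiate $I(v)=\int_M u\wedge\omega^m$ in time. Substituting $u_t=-\mathcal{L}_w u+d\psi$ and $(\omega^m)_t=-\mathcal{L}_w(\omega^m)$, and using that $\mathcal{L}_w$ is a derivation of $\wedge$, the terms combine into
$$\frac{d}{dt}\,I(v)=-\int_M\mathcal{L}_w\!\big(u\wedge\omega^m\big)+\int_M d\psi\wedge\omega^m .$$
The first term vanishes: $u\wedge\omega^m$ is top-degree, so $\mathcal{L}_w(u\wedge\omega^m)=d\,i_w(u\wedge\omega^m)$, and by Stokes $\int_M d\,i_w(u\wedge\omega^m)=\int_{\partial M}i_w(u\wedge\omega^m)=0$ because $w$ is tangent to $\partial M$ (the restriction to $\partial M$ of the contraction of an $n$-form along a boundary-tangent field is zero). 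For the second term, $d\omega^m=0$ gives $d\psi\wedge\omega^m=d(\psi\,\omega^m)$, hence $\int_M d(\psi\,\omega^m)=\int_{\partial M}\psi\,(du)^m|_{\partial M}$, which vanishes under the boundary condition $i^{*}_{\partial M}(du)^m=0$ — and this is precisely the condition needed for $I$ to descend to a functional on cosets, since $\int_M(u+df)\wedge\omega^m-I(v)=\int_{\partial M}f\,(du)^m$. Therefore $\frac{d}{dt}I(v)=0$.

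The analytic substance — that $du$ is Lie-transported by a divergence-free field — is already contained in equation (\ref{e34}); everything after that is a routine application of Cartan's formula and the Leibniz rule for $\mathcal{L}_w$. The one genuinely delicate point is the bookkeeping of the boundary terms in the odd-dimensional case: one must check that the condition $i^{*}_{\partial M}(du)^m=0$ holds for the admissible data and is preserved by the evolution (it is, since $du(t)=\Psi_t^{*}du(0)$ and $\Psi_t$ preserves $\partial M$). When $M$ is closed this issue disappears entirely, and the corollary becomes the verbatim analogue of the invariants obtained in \cite{khch}.
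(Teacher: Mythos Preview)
Your argument is correct and rests on the same observation as the paper's: equation (\ref{e34}) says $[u]$ moves along a coadjoint orbit of $\text{SDiff}(M)$ under the flow of $v+V_s$, and the functionals in (1) and (2) are invariant under that $\text{SDiff}(M)$-action. The paper's proof is literally those two sentences. You unpack the same mechanism concretely, working with the transported $2$-form $\omega=du$: pulling back by the flow $\Psi_t$ in the even case, and time-differentiating with Cartan's formula and Stokes in the odd case. The payoff of your more explicit route is that the boundary terms become visible; you correctly flag that in odd dimensions $I(v)=\int_M u\wedge(du)^m$ descends to cosets and is conserved only under the extra hypothesis $i^*_{\partial M}(du)^m=0$, a point the paper's two-line argument does not address.
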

\begin{proof}
The form of equation (\ref{e34}) shows that the moment $[u]$ moves along coadjoint orbits of the $\text{SDiff}(M)$-action corresponding to $v+V_s$. Because $\text{SDiff}(M)$-action coincides with the change of variables and preserves the volume form on $M$, the functionals (1) and (2) are invariant integrals for the flows of the CL equation.
\end{proof}
\begin{remark}
From the proof of this corollary, one can see that the moment $[u]=[v^b]$ is transferred by the flow corresponding to the velocity field $v+V_s$. For the CL equation we give the two equivalent definitions of the isovorticed fields corresponding to equation (\ref{e34}) and (\ref{e36}) respectively:
\begin{definition}\label{d31}
For equation (\ref{e34}), two vector fields $v$ and $v'$ are \textbf{isovorticed} if $curl\;v$ can be transferred to $curl\;v'$ by a volume-preserving diffeomorphism and satisfy the same boundary condition: $(v+V_s)\cdot\mathbf{n}=(v'+V_s)\cdot\mathbf{n}=0$.
\end{definition}
\begin{definition32}\label{d32}
For equation (\ref{e36}), two vector fields $v,\;v'\in\text{SVect}(M)$ are \textbf{isovorticed} if $curl\;(v-V_s)$ can be transferred to $curl\;(v'-V_s)$ by a volume-preserving diffeomorphism.
\end{definition32}
\end{remark}
\subsection{Steady flows of the Craik-Leibovich equation}
The vorticity equation of the incompressible CL flow corresponding to equation (\ref{e36}) is
\begin{equation}\label{e360}
\frac{\partial\omega}{\partial t}+\{v,\;\omega-curl\;V_s\}=0,
\end{equation}
where the vorticity field $\omega=curl\; v$. Note that the velocity field $v$ in the above equation is an element in the Lie algebra $\text{SVect}(M)$ which is differed from the solution of the CL equation (\ref{e01}) by a \mbox{shift $V_s$}, in other words, the velocity field $(v-V_s)$ is a solution of the CL equation (\ref{e01}). Therefore, the steady solution of equation (\ref{e360}) satisfies
\begin{equation}\label{e44}
\{v,curl\;(v-V_s)\}=0.
\end{equation}
Let us consider the following variatioanl problem:
\begin{problem}
Suppose that the central extension group $\widehat{\text{SDiff}(M)}$ corresponding to the central extension of Lie algebra $\text{SVect}(M)$ described in Theorem \ref{t33} exists. Let us fix a vector field $v_0\in\text{SVect}(M)$, find the critical points of the kinetic function $K(v)=\frac 12 \langle v,v\rangle$ on the set  $S=\{v\in\text{SVect}(M)\mid(v,1)\;\text{can be transferred to}\;(v_0,1)\;\text{by an adjoint action of group}\;\widehat{\text{SDiff}(M)}\}$.
\end{problem}
It turns out that the stationary solutions of the CL equation coincide with the critical points of this variational problem.
\begin{theorem}\label{t43}
The stationary solutions of the equation (\ref{e360}) which satisfy equation (\ref{e44}) coincide with the critical points of the varational problem 1.
\end{theorem}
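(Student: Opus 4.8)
The plan is to run Arnold's standard argument that the steady solutions of an Euler equation are exactly the critical points of the energy restricted to the orbit along which the flow moves, carried out here on the central extension $\hat{\mathfrak{g}}_{V_s}$ of $\text{SVect}(M)$ figuring in Theorem \ref{t33}. First I would rewrite the steady equation in the language of $1$-forms: with $u=v^b$, equation (\ref{e44}) says that $i_v\,d(v-V_s)^b$ is exact, equivalently that $\mathcal{L}_v(v^b-V_s^b)$ is an exact $1$-form (one adds the exact term $d\,i_v(v-V_s)^b$). This is the shape in which the condition should re-emerge from the variational problem.

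Next I would compute the tangent space of the constraint set $S$. The adjoint orbit of $(v_0,1)$ in $\hat{\mathfrak{g}}_{V_s}$ has tangent space $\{(ad_X v,\ \widehat{\omega}_{V_s}(X,v)):X\in\text{SVect}(M)\}$ at a point $(v,1)$ of it, so intersecting with the affine slice of central charge $1$ gives the tangent space
$$T_vS=\{\,ad_X v:\ X\in\text{SVect}(M),\ \widehat{\omega}_{V_s}(X,v)=0\,\}.$$
Equivalently, since $\widehat{\omega}_{V_s}$ is a coboundary (the Remark after Definition \ref{d14}), $S$ is the ordinary adjoint orbit of $v_0$ in $\text{SVect}(M)$ cut out by the single linear constraint $\langle V_s,v\rangle_{L^2}=\langle V_s,v_0\rangle_{L^2}$. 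Hence $v\in S$ is a critical point of $K(v)=\frac12\langle v,v\rangle$ exactly when $dK(v)=\langle v,\cdot\rangle_{L^2}$ annihilates $T_vS$, i.e.\ $\langle v,ad_X v\rangle_{L^2}=0$ whenever $\widehat{\omega}_{V_s}(X,v)=0$; by the Lagrange-multiplier principle this is equivalent to the existence of $\lambda\in\mathbb{R}$ with $\langle v,ad_X v\rangle_{L^2}=-\lambda\,\widehat{\omega}_{V_s}(X,v)$ for all $X\in\text{SVect}(M)$.

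Then I would unwind this first-order condition. Using $\langle v,ad_X v\rangle_{L^2}=\langle\mathcal{L}_X v^b,v\rangle$ and $\widehat{\omega}_{V_s}(X,v)=-\langle\mathcal{L}_X V_s^b,v\rangle$ (the cocycle formula in Theorem \ref{t33}), the condition becomes $\langle\mathcal{L}_X(v-\lambda V_s)^b,v\rangle=0$ for all $X\in\text{SVect}(M)$. Invoking the identity $\langle\mathcal{L}_X\beta,v\rangle=\int_M d\beta(X,v)\,\mu$, valid for $v\in\text{SVect}(M)$ because the $d\,i_X\beta$ term integrates away (exactly the computation in the Remark after Theorem \ref{t33}), one finds that $i_v\,d(v-\lambda V_s)^b$ is $L^2$-orthogonal to $\text{SVect}(M)$ and hence exact by the Hodge decomposition, i.e.\ $\{v,curl\,(v-\lambda V_s)\}=0$. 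One then checks that the normalization fixing the central charge to $1$ — the same $a=1$ by which equation (\ref{e36}) was read off from Proposition \ref{t11} — forces $\lambda=1$, which is precisely (\ref{e44}); the converse direction is immediate, since any $v$ solving (\ref{e44}) is a critical point of $K$ on the set $S$ built from $v_0=v$. Chaining these equivalences in both directions yields the asserted coincidence.

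The step I expect to be the genuine obstacle is the one the Problem already isolates as a standing hypothesis: that $\widehat{\text{SDiff}(M)}$ exists as a Lie group and that $S$ is an honest (weak) submanifold with the tangent space above, so that the notion of a critical point and the Lagrange-multiplier step are legitimate; coupled to it is the Hodge-decomposition step, which on a manifold with boundary must be run with care for the condition $v\cdot\mathbf{n}=0$. The remaining, more bookkeeping-flavoured point is to track the central charge precisely enough that the Lagrange multiplier is forced to equal $1$ rather than remaining a free constant — concretely, the identification, via the inertia operator, of $S$ with the coadjoint orbit of $\hat{\mathfrak{g}}_{V_s}^*$ through the image of $(v_0,1)$ along which the CL flow (\ref{e36}) moves.
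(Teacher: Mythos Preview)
Your route diverges from the paper's at the very first step, and the divergence creates the gap you yourself flag at the end. The paper does \emph{not} slice the adjoint orbit at $a=1$ and then vary inside the slice. It simply takes the tangent to the full adjoint orbit of $(v,1)$ in $\hat{\mathfrak{g}}_{V_s}$, namely $\delta(v,1)=([u,v],\,\widehat{\omega}_{V_s}(u,v))$ with $u$ unrestricted, and pairs it with $(v,1)$. The pairing of the central component $\widehat{\omega}_{V_s}(u,v)$ with the value $1$ is exactly what supplies the $V_s$-term and turns $\langle u, v\times curl\,v\rangle$ into $\langle u, v\times curl\,(v-V_s)\rangle$ in one line. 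No multiplier ever appears.

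By contrast, once you restrict to tangent vectors with $\widehat{\omega}_{V_s}(X,v)=0$, you have discarded the information that fixes the coefficient of $V_s$, and you are forced to reintroduce it as a Lagrange multiplier $\lambda$. Your claim that ``the normalization fixing the central charge to $1$ forces $\lambda=1$'' is not an argument: the value $a=1$ has already been consumed in defining $S$ and cannot be invoked a second time to produce an extra equation. Indeed, in your own description $S$ is the ordinary $\text{SDiff}(M)$-adjoint orbit of $v_0$ cut by the single linear constraint $\langle V_s,v\rangle=\langle V_s,v_0\rangle$; the multiplier attached to that constraint is genuinely free from the slice-restricted variational principle alone, and you would get $\{v,curl\,(v-\lambda V_s)\}=0$ for some undetermined $\lambda$. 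What actually pins the coefficient to $1$ is the pairing of the central variation $\delta a=\widehat{\omega}_{V_s}(u,v)$ with the value $a=1$---precisely the computation the paper performs by treating the energy as $\frac{1}{2}\langle(v,a),(v,a)\rangle$ on the full orbit rather than $\frac{1}{2}\langle v,v\rangle$ on the slice. So the fix is not bookkeeping but a change of setup: do not intersect with $a=1$ before varying.
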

\begin{proof}
Let $(u,b)$ be an arbitrary element in the vector space $\text{SVect}(M)\oplus \mathbb{R}$. The variation $\delta(v,a)$ of a field $(v,a)$ under the adjoint action of $(u,b)$ is given by
$$
\delta(v,a)=[(u,b),\;(v,a)]^{\wedge}=([u,\;v],-\langle V_s,[u,\;v]\rangle).
$$
Let $v\in\text{SVect}(M)$ be a critical point of Problem 1, then the first variation of $E$ taken at $v$ should be 0, hence we have
\begin{align*}
0=\delta E=&\left\langle (v,1),\delta (v,1)\right\rangle=\langle (v,1),(\{v,\;u\},-\langle V_s,[u,\;v]\rangle)\\
=&\left\langle (v,1),\{v,\;u\}\right\rangle-\langle V_s,curl(u\times v)\rangle
=\left\langle u,v\times curl\;u\right\rangle-\langle curl\;V_s,u\times v\rangle\\
=&\left\langle u,v\times curl\;v\right\rangle-\langle u,v\times curl\;V_s\rangle
=\left\langle u,v\times curl\;(v-V_s)\right\rangle,
\end{align*}
So we have $\{v,curl\;(v-V_s)\}=0$.
\end{proof}

\section{Stability theorem for 2-dimensional steady flows of the Craik-Leibovich equation}
\subsection{Stability theorem for equilibrium points on central extensions of Lie algebras}
Let $\hat{\mathfrak g}=\mathfrak g\oplus \mathbb{R}$ be a one-dimensional central extension of an arbitrary Lie algebra with 2-cocycle $\hat{\omega}$. We introduce the bilinear operation $B:\mathfrak g \times \mathfrak g\rightarrow \mathfrak g$ defined by
\begin{equation}\label{e401}
\langle[v_1,v_2],v_3\rangle=\langle B(v_3,v_1),v_2\rangle,
\end{equation}
where $v_i\in \mathfrak g,\;i=1,2,3$. By the isomorphism between Lie algebra $\mathfrak g$ and its dual $\mathfrak{g}^*$, we have the Euler equation on the Lie algebra $\mathfrak g$ \cite{arkh}:
\begin{equation}\label{e402}
\frac{dv}{dt}=B(v,v),
\end{equation}
where $v\in\mathfrak g$.
\par
We also define an operator $w:\mathfrak g\rightarrow \mathfrak g$ induced from the 2-cocycle $\hat{\omega}$ by $\hat{\omega}(u,v)=\langle w(u),v\rangle$ for any $u,\;v\in \mathfrak g$.
\begin{proposition}\label{t401}
The Euler equation on the central extension $\hat{\mathfrak g}$ is
\begin{equation}\label{e404}
\frac{dv}{dt}=B(v,v)+a\;w(v).
\end{equation}
So the equilibrium point $(v_e,a_e)\in\hat{\mathfrak g}$ satisfies
\begin{equation}\label{e403}
B(v_e,v_e)+a_e\;w(v_e)=0.
\end{equation}
\end{proposition}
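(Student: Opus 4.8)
The plan is to reduce the statement to a short computation with the bilinear operation introduced in (\ref{e401}), performed for the extended algebra $\hat{\mathfrak g}=\mathfrak g\oplus\mathbb R$ in place of $\mathfrak g$ itself. Writing elements of $\hat{\mathfrak g}$ as pairs $V=(v,a)$, I would first equip $\hat{\mathfrak g}$ with the pairing obtained from the given non-degenerate pairing on $\mathfrak g$ by declaring the central line orthogonal to $\mathfrak g$, e.g. $\langle(v,a),(u,b)\rangle:=\langle v,u\rangle+ab$ (only the orthogonality matters, not the normalization of the $ab$-term). Taking this as the inertia operator and using the resulting identification $\hat{\mathfrak g}\cong\hat{\mathfrak g}^*$ exactly as in the derivation of (\ref{e402}), the Euler equation on $\hat{\mathfrak g}$ takes the form $\frac{d}{dt}V=\hat B(V,V)$, where $\hat B:\hat{\mathfrak g}\times\hat{\mathfrak g}\to\hat{\mathfrak g}$ is defined, in parallel with (\ref{e401}), by $\langle[V_1,V_2]^{\wedge},V_3\rangle=\langle\hat B(V_3,V_1),V_2\rangle$ for all $V_i\in\hat{\mathfrak g}$. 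So the whole argument comes down to identifying $\hat B$ explicitly.

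For $V_i=(v_i,a_i)$, expanding $[V_1,V_2]^{\wedge}=([v_1,v_2],\hat\omega(v_1,v_2))$ and invoking the defining relation (\ref{e401}) of $B$ together with $\hat\omega(u,v)=\langle w(u),v\rangle$,
$$
\langle[V_1,V_2]^{\wedge},V_3\rangle=\langle[v_1,v_2],v_3\rangle+a_3\,\hat\omega(v_1,v_2)=\langle B(v_3,v_1)+a_3\,w(v_1),\,v_2\rangle .
$$
Writing $\hat B(V_3,V_1)=(P,Q)$ and pairing with $V_2=(v_2,a_2)$, the left-hand side equals $\langle P,v_2\rangle+Q\,a_2$; since the right-hand side above is independent of $a_2$, non-degeneracy forces $Q=0$ and $P=B(v_3,v_1)+a_3\,w(v_1)$. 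Hence $\hat B\big((v_3,a_3),(v_1,a_1)\big)=\big(B(v_3,v_1)+a_3\,w(v_1),\,0\big)$.

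Substituting $V=(v,a)$ into $\frac{d}{dt}V=\hat B(V,V)$ then yields exactly $\frac{dv}{dt}=B(v,v)+a\,w(v)$ together with $\frac{da}{dt}=0$; thus the central charge $a$ is a first integral, and freezing it at its value $a_e$ we see that an equilibrium $(v_e,a_e)$ is precisely a solution of $\hat B\big((v_e,a_e),(v_e,a_e)\big)=0$, i.e. of $B(v_e,v_e)+a_e\,w(v_e)=0$, which is (\ref{e403}). One could alternatively deduce this from Proposition \ref{t11}, whose proof is the same $ad^*$-computation specialized to a coboundary; the direct derivation above has the advantage of working verbatim for an arbitrary $2$-cocycle $\hat\omega$.

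I do not expect a real obstacle — the content is entirely bookkeeping. The two points that need care are: fixing the sign and normalization conventions so that the Lie-algebra form of the Euler equation really reads $\frac{d}{dt}V=\hat B(V,V)$ (this is where the choice of extended pairing enters, and where one should double-check consistency with (\ref{e402})), and observing that the $\mathbb R$-valued cocycle $\hat\omega$ does not involve the central coordinate, which is exactly why the second component of $\hat B$ vanishes and the central direction decouples (giving $\dot a=0$). As throughout the paper, all of this is understood formally in the infinite-dimensional setting, but no analytic difficulty appears beyond what is already implicit in (\ref{e402}).
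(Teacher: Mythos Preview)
Your proof is correct and follows essentially the same route as the paper: both compute $\hat B$ directly from the defining relation $\langle[V_1,V_2]^{\wedge},V_3\rangle=\langle\hat B(V_3,V_1),V_2\rangle$ by expanding the extended bracket and using $\hat\omega(u,v)=\langle w(u),v\rangle$, then read off the Euler equation and the decoupled $\dot a=0$. Your version is slightly more explicit about the extended pairing and the vanishing of the central component of $\hat B$, but the argument is the same.
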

\begin{proof}
Let us compute the bilinear operation $\hat{B}:\hat{\mathfrak g}\times\hat{\mathfrak g}\rightarrow\hat{\mathfrak g}$ of the central extension $\hat{\mathfrak g}$:
\begin{align*}
\left\langle \hat{B}((v_3,a_3),(v_1,a_1)),(v_2,a_2)\right\rangle
=&\left\langle [(v_1,a_1),(v_2,a_2)],(v_3,a_3)\right\rangle
=\left\langle ([v_1,v_2],\hat{\omega}(v_1,v_2)),(v_3,a_3)\right\rangle\\
=&\left\langle [v_1,v_2],v_3\right\rangle+a_3\hat{\omega}(v_1,v_2)
=\left\langle B(v_3,v_1)+a_3 w(v_1),v_2\right\rangle,
\end{align*}
where $(v_i,a_i)\in\hat{\mathfrak g},\;i=1,2,3$. So the Euler equation on $\hat{\mathfrak g}$
$$
\frac{d(v,a)}{dt}=\hat{B}((v,a),(v,a))
$$
becomes equation (\ref{e404}). (For conciseness, here we omit the second equation $\frac{da}{dt}=0$.)
\end{proof}
\par
The dual space $\hat{\mathfrak g}^*$ of Lie algebra $\hat{\mathfrak g}$ is foliated by the coadjoint orbits, the isomorphism between $\hat{\mathfrak g}$ and $\hat{\mathfrak g}^*$ gives a coadjoint foliation of the Lie algebra $\hat{\mathfrak g}$. Next we prove a stability theorem for the equilibrium points on $\hat{\mathfrak g}$.
\begin{theorem}\label{t41}
Assume that the equilibrium point $(v_e,a_e)\in\hat{\mathfrak g}$ is regular for the coadjoint foliation of the Lie algebra $\hat{\mathfrak g}$. Consider a test quadratic form $T\mid_{(v_e,a_e)}$:
\begin{equation}\label{e41}
T\mid_{(v_e,a_e)}(\xi)=\langle B(v_e,\zeta)+a_e\;w(\zeta),\;B(v_e,\zeta)+a_e\;w(\zeta)\rangle+\langle [\zeta,v_e],\;B(v_e,\zeta)+a_e\;w(\zeta)\rangle,
\end{equation}
where $\xi=B(v_e,\zeta)+a_e\;w(\zeta)\in\mathfrak g$. If for all $\xi\in\mathfrak g$ we have $T\mid_{(v_e,a_e)}(\xi)>0$ or $T\mid_{(v_e,a_e)}(\xi)<0$ , then the equilibrium solution $(v_e,a_e)\in\hat{\mathfrak g}$ of the Euler equation on the central extension $\hat{\mathfrak g}^*$ is Lyapunov stable.
\end{theorem}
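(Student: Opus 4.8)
The plan is to run Arnold's energy--Casimir method on the one-dimensional central extension $\hat{\mathfrak g}$. Identify $\hat{\mathfrak g}^*$ with $\hat{\mathfrak g}$ through the inertia operator, so that equation (\ref{e404}) is the Euler flow on $\hat{\mathfrak g}$; along it the kinetic energy $H(v,a)=\frac12\langle v,v\rangle$ (up to the overall sign in Definition \ref{d12}, which is immaterial here) is conserved, the component $a\equiv a_e$ is a constant of motion, and each coadjoint orbit is an invariant submanifold. Let $\mathcal O$ be the coadjoint orbit through the equilibrium $p=(v_e,a_e)$. From the expression for $\hat B$ obtained in the proof of Proposition \ref{t401}, the fundamental vector field of the coadjoint action attached to $(\zeta,c)\in\hat{\mathfrak g}$ sends $p$ to $\hat B(p,(\zeta,c))=\bigl(B(v_e,\zeta)+a_e w(\zeta),\,0\bigr)$, the central direction $c$ dropping out; hence $T_p\mathcal O=\{(\xi,0):\xi=B(v_e,\zeta)+a_e w(\zeta),\ \zeta\in\mathfrak g\}$, which is exactly the set of $\xi$ over which the test form (\ref{e41}) is assumed definite.

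The first step is to verify that $p$ is a critical point of the restriction $H|_{\mathcal O}$. This is the instance, for the extended algebra, of the general fact that steady solutions of the Euler equation are precisely the critical points of the energy on coadjoint orbits (\cite{arkh}); concretely, $dH_p(\xi,0)=\langle v_e,\,B(v_e,\zeta)+a_ew(\zeta)\rangle$, and expanding the right side with the defining identity (\ref{e401}) of $B$, the skew-symmetry of $w$ (which comes from the antisymmetry of $\hat\omega$), and the equilibrium relation (\ref{e403}), all the resulting terms cancel, so $dH_p$ annihilates $T_p\mathcal O$.

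The second step is to compute the Hessian of $H|_{\mathcal O}$ at $p$. Since $p$ is critical, this Hessian is obtained as $\frac{d^2}{dt^2}H(\gamma(t))\big|_{t=0}$ for any curve $\gamma$ on $\mathcal O$ with $\gamma(0)=p$ and $\dot\gamma(0)=(\xi,0)$; I would take $\gamma$ to be the integral curve through $p$ of the vector field $m\mapsto\hat B(m,(\zeta,0))$, which lies on $\mathcal O$ and has $\dot\gamma(0)=(\xi,0)$. Its acceleration at $t=0$ is $\hat B((\xi,0),(\zeta,0))=(B(\xi,\zeta),0)$, so with $H=\frac12\langle v,v\rangle$ one gets $\frac{d^2}{dt^2}H(\gamma(t))\big|_{t=0}=\langle\xi,\xi\rangle+\langle B(\xi,\zeta),v_e\rangle$, and a final application of (\ref{e401}) turns the last term into $\langle[\zeta,v_e],\xi\rangle$. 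Thus the Hessian of $H|_{\mathcal O}$ at $p$, evaluated on $(\xi,0)$, is exactly $T\mid_{(v_e,a_e)}(\xi)$ from (\ref{e41}); in particular the right-hand side of (\ref{e41}) descends to a well-defined quadratic form on $T_p\mathcal O$, independent of the $\zeta$ used to represent $\xi$.

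The last step is the energy--Casimir conclusion. Regularity of $p$ for the coadjoint foliation means that near $p$ the coadjoint orbits form a smooth fibration whose transverse coordinates are local Casimir functions $C_1,\dots,C_k$, each a first integral of (\ref{e404}). Since $dH_p$ vanishes on $T_p\mathcal O$, it is a linear combination of the $dC_i|_p$, so $H$ can be replaced by a conserved function $\Phi=H+\sum_i\lambda_iC_i$ with $d\Phi_p=0$ whose second differential restricted to $\mathcal O$ still equals $T$. If $T$ is positive (respectively negative) definite, then $\Phi|_{\mathcal O}=H|_{\mathcal O}$ has a strict local minimum (resp.\ maximum) at $p$; by persistence of this nondegenerate critical point, every nearby orbit $\mathcal O'$ carries a strict extremum $p'$ of $H|_{\mathcal O'}$ close to $p$, with $H(p')$ depending continuously on $\mathcal O'$. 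A solution starting near $p$ preserves the values of the $C_i$, hence stays on a nearby orbit $\mathcal O'$, and also preserves $H$; being trapped between its conserved energy and the strict extremum of $H|_{\mathcal O'}$ at $p'$, it cannot leave a small neighborhood of $p$ --- which is Lyapunov stability. I expect the main obstacle to be exactly this last step in infinite dimensions: upgrading ``strict extremum on the orbit'' to a genuine Lyapunov estimate requires a uniform convexity/coercivity bound for $T$ of the kind Arnold used for planar Euler flows, together with a rigorous treatment of the local coadjoint foliation underlying the regularity hypothesis and of the implicit-function argument that tracks the critical point onto neighbouring orbits; the differential-geometric computations in the first three steps are otherwise routine.
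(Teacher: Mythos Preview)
Your proposal is correct and follows essentially the same route as the paper: both compute the second variation of the kinetic energy $K=\tfrac12\langle v,v\rangle$ on the coadjoint leaf through $(v_e,a_e)$, use $\hat B(v_e,\zeta)=B(v_e,\zeta)+a_e\,w(\zeta)$ from Proposition~\ref{t401} to identify the result with the test form~(\ref{e41}), and then invoke Arnold's stability criterion. The only difference is packaging---the paper quotes the second-variation formula and the ``revised Lagrange's theorem'' directly from \cite{arkh}, whereas you unpack both computations explicitly and correctly flag the infinite-dimensional caveats that the paper leaves implicit in its citation.
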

\begin{proof}
As proved by Arnold (see e.g. \cite{arkh}), the second variation of the kinetic function $K(v)=\frac 12 \langle v,\;v\rangle$ on the leaf of this coadjoint foliation of $\hat{\mathfrak g}$ is:
\begin{equation}\label{e410}
2\delta^2K\mid_{(v_e,a_e)}(\xi)=\langle \hat{B}(v_e,\zeta),\;\hat{B}(v_e,\zeta)\rangle+\langle [\zeta,v_e],\;\hat{B}(v_e,\zeta)\rangle,
\end{equation}
where $\xi=\hat{B}(v_e,\zeta)\in\mathfrak g$. Note that the quadratic form $\delta^2 K$ does not change with respect to the different choices of $\zeta$, it depends only on $\xi=\hat{B}(v_e,\zeta)$.
\par
By the computation in Proposition \ref{t401}, we have $\hat{B}(v_e,\zeta)=B(v_e,\zeta)+a_e\;w(\zeta)$. Substitute this into (\ref{e410}) we obtain the test quadratic form (\ref{e41}). The Lyapunov stability of the equilibrium point $(v_e,a_e)$ follows from a revised Langrange's theorem in chapter \S II.3 of \cite{arkh}.
\end{proof}

\subsection{An a priori estimate for 2-dimensional steady flows of the CL equation}
Let $D$ be a 2-dimensional domain with boundary and $dA$ is an area form. First let us explain the notations. Velocity field $v_e$ satisfies equation (\ref{e44}), therefore it is a stationary solution of the equation (\ref{e360}), and function $\psi_e$ is the stream function of it. Then let function $\psi_e^*$ be the stream function of the shifted velocity field $v_e-V_s$.
\par
Now we can prove the following theorem which gives an a priori estimate for 2-dimensional steady flows of the CL equation:
\begin{theorem}\label{t42}
Assume that in a 2-dimensional domain $D$ with a area form $dA$, (i) there exists a function $F$ such that $\psi_e=F(\Delta\psi_e^*)$, (ii) also, there are two constants $c_1$ and $c_2$ for which
\begin{equation}\label{e46}
0<c_1\leq\frac{\nabla \psi_e}{\nabla \Delta \psi_e^*}\leq c_2<\infty.
\end{equation}
Let $\psi(x,y,t)=\psi_e+\widetilde{\psi}(x,y,t)$ be the stream function corresponding to a different solution of the CL equation such that $\oint_{\;\partial D}\nabla^{\perp}\psi\cdot dl=\oint_{\;\partial D}\nabla^{\perp}\psi_e\cdot dl$. Then we have the following inequality for the perturbation $\widetilde{\psi}=\widetilde{\psi}(x,y,t)$
\begin{equation}\label{e47}
\|\nabla \widetilde{\psi}\|^2_2+c_1\|\Delta \widetilde{\psi}\|^2_2\leq\|\nabla \widetilde{\psi}_0\|^2_2+c_2\|\Delta \widetilde{\psi}_0\|^2_2,
\end{equation}
where $\widetilde{\psi}_0=\widetilde{\psi}(x,y,0)$ and $\|\cdot\|^2_2$ stands for the square of the $L^2-$norm, which is $\|v\|^2_2=\iint_D(v,v)\;dA$ for a vector field $v$ and $\|f\|^2_2=\iint_D f^2\;dA$ for a function $f$.
\end{theorem}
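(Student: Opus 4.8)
The plan is to run Arnold's energy--Casimir method directly on the vorticity form (\ref{e360}) of the planar CL equation, for which $v_e$ is the equilibrium singled out by (\ref{e44}). First I would exhibit two functionals conserved along (\ref{e360}). The kinetic energy $E(v)=\tfrac12\|v\|_2^2=\tfrac12\|\nabla\psi\|_2^2$ is conserved since, by Corollary \ref{t32} and the central-extension description of Theorem \ref{t33} rewritten in the shifted variable as (\ref{e36}), it is, up to sign, the Hamiltonian generating the flow. The generalized enstrophies
\[
C_g(v)=\iint_D g\big(\mathrm{curl}(v-V_s)\big)\,dA=\iint_D g(\Delta\psi^*)\,dA ,
\]
one for each smooth $g$, are conserved because (\ref{e360}) says precisely that $\mathrm{curl}(v-V_s)=\omega-\mathrm{curl}\,V_s$ is transported by the flow of the divergence-free field $v$ (the $n=2$ case of Corollary \ref{t34}, read in the shifted picture via Definition 3.2$'$), whence $\frac{d}{dt}C_g=-\iint_D\{v,g(\Delta\psi^*)\}\,dA=0$, $v$ being divergence-free and tangent to $\partial D$.

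Second, I would set $\mathcal H(v)=E(v)+C_g(v)$ and pin down the free function by $g'=F$, with $F$ the profile of hypothesis (i). Writing $\psi=\psi_e+\widetilde\psi$ and noting that $\widetilde\psi$ is a stream function of $v-v_e$, so that $\Delta\psi^*=\Delta\psi_e^*+\Delta\widetilde\psi$, I would expand $\mathcal H(v)-\mathcal H(v_e)$ \emph{exactly}: the energy part is already quadratic, and Taylor's formula with integral remainder gives
\[
C_g(v)-C_g(v_e)=\iint_D F(\Delta\psi_e^*)\,\Delta\widetilde\psi\,dA+\iint_D R(x)\,(\Delta\widetilde\psi)^2\,dA,\qquad R(x)=\int_0^1(1-s)\,F'\!\big(\Delta\psi_e^*+s\,\Delta\widetilde\psi\big)\,ds .
\]
Integration by parts turns $\iint_D\nabla\psi_e\cdot\nabla\widetilde\psi\,dA$ into $-\iint_D\psi_e\,\Delta\widetilde\psi\,dA$ plus a boundary term, and hypothesis (i) in the form $F(\Delta\psi_e^*)=\psi_e$ makes the two terms linear in $\widetilde\psi$ cancel; the boundary term is killed by the circulation hypothesis $\oint_{\partial D}\nabla^{\perp}\psi\cdot dl=\oint_{\partial D}\nabla^{\perp}\psi_e\cdot dl$, which forces the circulation of $v-v_e$ around each component of $\partial D$ and $\iint_D\Delta\widetilde\psi\,dA$ to vanish, so $\widetilde\psi$ may be taken constant on $\partial D$. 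One is left with the exact identity
\[
\mathcal H(v)-\mathcal H(v_e)=\tfrac12\|\nabla\widetilde\psi\|_2^2+\iint_D R(x)\,(\Delta\widetilde\psi)^2\,dA .
\]

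Third, I would invoke hypothesis (ii): since $R(x)$ is the average of $F'$ against the weight $(1-s)$ on $[0,1]$, which has total mass $\tfrac12$, the bound $c_1\le F'\le c_2$ gives $\tfrac{c_1}{2}\le R(x)\le\tfrac{c_2}{2}$ pointwise, hence
\[
\tfrac12\|\nabla\widetilde\psi\|_2^2+\tfrac{c_1}{2}\|\Delta\widetilde\psi\|_2^2\ \le\ \mathcal H(v)-\mathcal H(v_e)\ \le\ \tfrac12\|\nabla\widetilde\psi\|_2^2+\tfrac{c_2}{2}\|\Delta\widetilde\psi\|_2^2 .
\]
As $\mathcal H$ is a constant of motion, applying the left inequality at time $t$, the right inequality at time $0$, and multiplying by $2$ produces exactly (\ref{e47}); this is the quantitative counterpart of Theorem \ref{t41}, where $c_1>0$ is precisely what renders the relevant second variation positive definite.

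I expect the main obstacle to lie not in the algebra but in the boundary bookkeeping: on a multiply connected $D$ one must deduce, from the single scalar circulation hypothesis, that \emph{all} the boundary terms above — in $\frac{d}{dt}C_g$, in the cancellation of the linear term, and in $\iint_D\Delta\widetilde\psi\,dA=0$ — genuinely vanish, which may require a slightly stronger, component-by-component, reading of that hypothesis. A minor secondary point is that $R(x)$ samples $F'$ along the whole segment from $\Delta\psi_e^*$ to $\Delta\psi^*$, so (ii) should be understood as controlling $F'$ on the range of values actually attained — equivalently, one wants $F\in C^1(\mathbb R)$ with $c_1\le F'\le c_2$.
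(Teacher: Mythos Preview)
Your proposal is correct and follows essentially the same route as the paper: both build the conserved functional $\mathcal H=E+C_g$ with $g'=F$ (the paper writes $P'=F$ and $\Gamma=\tfrac12\|\nabla\psi\|_2^2+\iint_D P(\Delta\psi^*)\,dA$), cancel the linear term via $F(\Delta\psi_e^*)=\psi_e$ and the circulation hypothesis, and then sandwich the exact Taylor remainder between $\tfrac{c_1}{2}(\Delta\widetilde\psi)^2$ and $\tfrac{c_2}{2}(\Delta\widetilde\psi)^2$. The only cosmetic difference is that the paper first records the geometric second-variation formula on the coadjoint leaf before passing to the appendix for the analytic estimate, whereas you go straight to the energy--Casimir computation; your flagged boundary subtlety on multiply connected $D$ is handled equally tersely in the paper.
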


\begin{proof}
Consider the leaf of the coadjoint foliation of $\widehat{\text{SVect}(D)}$ contains the equilibrium point $(v_e,1)$, this leaf consists of all the elements of the form $(v,1)\in\widehat{\text{SVect}(D)}$ where $v$ is divergence-free vector field isovorticed to the equilibrium field $v_e$ in the sense of Definition 3.1$'$.
\par
First, we claim that the second variation of the kinetic function $K(v)=\frac 12 \iint_D(v,v)\;dA$ on the leaf described above is
\begin{equation}\label{e45}
\delta^2K\mid_{v_e}(\xi)=\frac 12 \iint_D\left( (\xi,\;\xi)+\frac{\nabla \psi_e}{\nabla \Delta \psi_e^*}(curl\;\xi)^2\right)\;dA,
\end{equation}
where $\xi$ stands for a variation field at $v_e$.
\par
Next let us prove this claim. According to equation (\ref{e410}), we have
\begin{equation}\label{e450}
2\delta^2K\mid_{v_e}(\xi)=\iint_D ((\xi,\;\xi)+(\xi,[\zeta,v_e]))\;dA,
\end{equation}
where $\xi=B(v_e,\zeta)+w(\zeta)=B(v_e-V_s,\zeta)$. The last term of this equation is
\begin{equation}\label{e451}
\iint_D(\xi,[\zeta,v_e])\;dA=\iint_D(\xi,curl\;(\zeta\times v_e))\;dA=\iint_D(curl\;\xi,(\zeta\times v_e))\;dA.
\end{equation}
Since $v_e=\nabla^{\perp}\;\psi_e$ and $v_e-V_s=\nabla^{\perp}\;\psi_e^*$, we get
$$
curl\;\xi=\mathcal{L}_{\zeta}\Delta\psi_e^*=(\zeta,\nabla\Delta\psi_e^*),
$$
$$
\zeta\times v_e=\zeta\times(\nabla^{\perp}\;\psi_e)=(\zeta,\nabla\psi_e),
$$
hence, we have
\begin{equation}\label{e452}
\zeta\times v_e=\frac{\nabla \psi_e}{\nabla \Delta \psi_e^*}curl\;\xi.
\end{equation}
By equation (\ref{e450}), (\ref{e451}) and (\ref{e452}), we prove the claimed equation (\ref{e45}).
\par
The rest of the proof is similar to Arnold's original proof of the criteria of hydrodynamic nonlinear stability. One can refer to \cite{arkh}. For completeness of the paper, we give the detailed proof in the appendix.
\end{proof}

\section{Appendix: The detailed proof of Theorem \ref{t42}}
By the assumption of Theorem \ref{t42}, we have $\psi_e=F(\Delta\psi_e^*)$, let the function $P$ be its primitive, i.e. $P'=F$. Then, $P''(\Delta \psi_e^*)=\frac{\nabla \psi_e}{\nabla \Delta \psi_e^*}$, again by the assumption we have $c_1\leq P''(\omega)\leq c_2$ which gives
$$
c_1\frac{\tilde{\omega}^2}{2}\leq P(\omega+\tilde{\omega})-P(\omega)-P'(\omega)\tilde{\omega}\leq c_2\frac{\tilde{\omega}^2}{2}.
$$
This implies
\begin{equation}\label{e51}
\|\nabla \widetilde{\psi}\|^2_2+2\iint_D(P(\Delta\psi_e^*+\Delta\widetilde{\psi})-P(\Delta\psi_e^*)-P'(\Delta\psi_e^*)\Delta\widetilde{\psi})\;dA\geq\|\nabla \widetilde{\psi}\|^2_2+c_1\|\Delta \widetilde{\psi}\|^2_2,
\end{equation}
\begin{equation}\label{e52}
\|\nabla \widetilde{\psi}_0\|^2_2+2\iint_D(P(\Delta\psi_e^*+\Delta\widetilde{\psi}_0)-P(\Delta\psi_e^*)-P'(\Delta\psi_e^*)\Delta\widetilde{\psi}_0)\;dA\leq\|\nabla \widetilde{\psi}_0\|^2_2+c_2\|\Delta \widetilde{\psi}_0\|^2_2.
\end{equation}
Introducing a functional
$$
C(\widetilde{\psi})=\frac{\|\nabla \widetilde{\psi}\|^2_2}{2}+\iint_D\left(P(\Delta\psi_e^*+\Delta\widetilde{\psi})-P(\Delta\psi_e^*)-P'(\Delta\psi_e^*)\Delta\widetilde{\psi}\right)\;dA,
$$
then the LHS of (\ref{e51}) and (\ref{e52}) are $2C(\widetilde{\psi}(t))$ and $2C(\widetilde{\psi}(0))$ respectively. Therefore if we could prove
\begin{equation}\label{e53}
C(\widetilde{\psi}(t))=C(\widetilde{\psi}(0)),
\end{equation}
then the theorem follows immediately from (\ref{e51}), (\ref{e52}) and (\ref{e53}).
\begin{proof}[Proof of (\ref{e53})]
We construct the following invariant functional according to the conservation of the kinetic energy and vorticity:
$$
\Gamma(\psi)=\frac{\|\nabla \psi\|^2_2}{2}+\iint_D P(\Delta \psi^*)\;dA,
$$
where $\nabla^{\perp}\psi^*+V_s=\nabla^{\perp}\psi$. The first variation of $\Gamma$ at the equilibrium solution $\psi_e$ is
\begin{align*}
\delta\;\Gamma\mid_{\psi_e}(\widetilde\psi)=&\iint_D((\nabla\widetilde\psi,\;\nabla\psi_e)+P'(\Delta\psi_e^*)\Delta\widetilde\psi)\;dA\\
=&\iint_D(-\psi_e\Delta\widetilde\psi+P'(\Delta\psi_e^*)\Delta\widetilde\psi)\;dA+\oint_{\;\partial D}\psi_e\frac{\partial \widetilde\psi}{\partial n}\;dl,
\end{align*}
since $P'(\Delta\psi_e^*)=F(\Delta\psi_e^*)=\psi_e$ and $\oint_{\;\partial D}\psi_e\frac{\partial \widetilde\psi}{\partial n}\;dl=0$, we get $\delta\;\Gamma\mid_{\psi_e}(\widetilde\psi)=0$.
\par
Note that for another functional $\tilde{\Gamma}(\widetilde\psi):=\Gamma(\psi_e+\widetilde\psi)-\Gamma(\psi_e)$, we have
$$
\tilde{\Gamma}(\widetilde\psi(t))=\tilde{\Gamma}(\widetilde\psi(0)),
$$
and
$$
\tilde{\Gamma}(\widetilde\psi)=\delta\;\Gamma\mid_{\psi_e}(\widetilde\psi)+C(\widetilde\psi),
$$
these two equalities imply (\ref{e53}).
\end{proof}

\bigskip

{\bf  Acknowledgements.}
The author is grateful to Boris Khesin for many fruitful discussions, and would like to thank the Department of Mathematics at the University of Toronto for  hospitality during his visit. He also wants to thank Xiaoping Yuan for valuable suggestions and encouragement.

\bigskip

  \end{document}